\newtheorem{algTitle}{Algorithm}
\newtheorem{definition}{Definition}
\newtheorem{proposition}{Proposition}
\newtheorem{corollary}{Corollary}
\newtheorem{remark}{Remark}
\newcommand*{\defeq}{\stackrel{\text{def}}{=}}
\newcommand{\argmin}{\operatornamewithlimits{argmin}}
\begin{document}
\begin{frontmatter}
\title{Computer Aided Restoration of Handwritten Character Strokes}

\author{Barak Sober}
\author{David Levin}
\address{School of Mathematical Sciences, Tel-Aviv University, 55 Haim Levanon St., Tel Aviv 6997801, Israel}

\maketitle
\begin{abstract}
	This work suggests a new variational approach to the task of computer aided segmentation and restoration of incomplete characters, residing in a highly noisy document image. We model character strokes as the movement of a pen with a varying radius. Following this model, in order to fit the digital image, a cubic spline representation is being utilized to perform gradient descent steps, while maintaining interpolation at some initial (manually sampled) points. The proposed algorithm was used in the process of restoring approximately 1000 ancient Hebrew characters (dating to ca. $8^{th}$-$7^{th}$ century BCE), some of which are presented herein and show that the algorithm yields plausible results when applied on deteriorated documents. 
\end{abstract}
\begin{keyword}
Computer Aided Design \sep Hebrew Ostraca \sep First Temple Period \sep Historical Document Analysis
\end{keyword}

\end{frontmatter}

\section{Introduction}
The initial impetus for this work comes from the field of First Temple Period (i.e., Iron Age in Israel) Hebrew epigraphy (the study of the development of writing through time). Most of the surviving inscriptions dating to that period in Israel are written with ink  on potshards (ostraca). These inscriptions are very badly preserved, as they spent three millenia underground, and the surviving characters are incomplete, noisy and sometimes barely legible (see Figure \ref{fig:Ostraca}). 

\begin{figure}[ht]
	\begin{centering}
		\includegraphics[width={0.8\linewidth}]{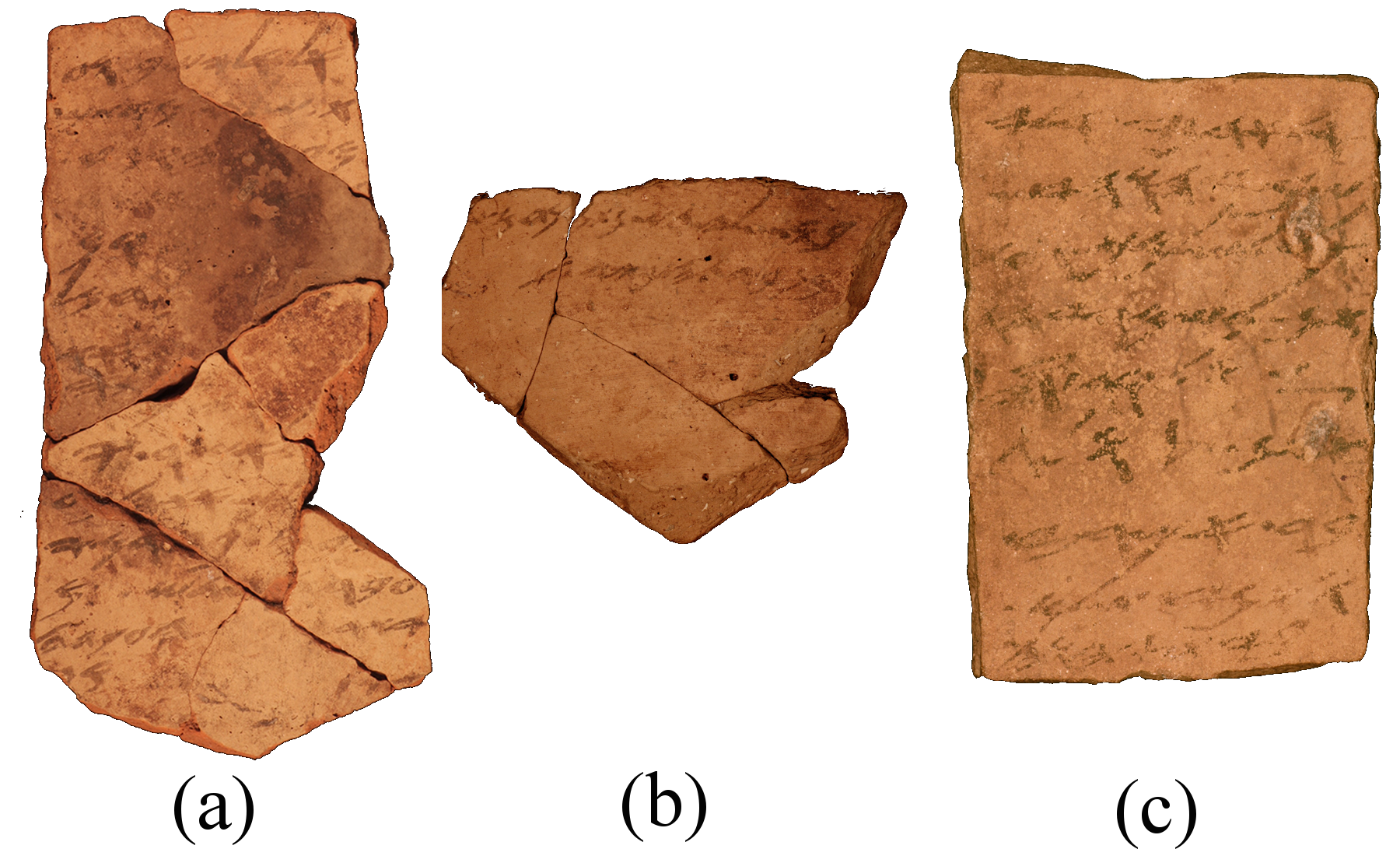}
		\par\end{centering}
		\caption{Ostraca images: (a) ostracon No. 5 from Tel Arad; (b) ostracon No. 17 from Tel Arad; and (c) ostracon No. 18 from Tel Arad. Even the better preserved ostraca are blurred and partly effaced. \label{fig:Ostraca}}
\end{figure}

Traditionally, a newly discovered inscription is first being transcribed manually to create a handmade binarization of the document (facsimile - Fig. \ref{fig:OstFacs}). These facsimiles are then used to decipher the meaning of the transcribed document, as well as to date inscriptions from unclear contexts \cite{rollston1999script}.
  However, the process of creating such transcriptions is subject to human error and in many cases mixes documentation with interpretation. For example, Figure \ref{fig:ostrafacs_problem} shows an instance of a transcribed character that did not exist in the original text. As a consequence, one cannot rely on these human-made binarizations as a basis for other document analysis tasks (be it automatic or manual).  

\begin{figure}[ht]
	\begin{centering}
		\includegraphics[width={\linewidth}]{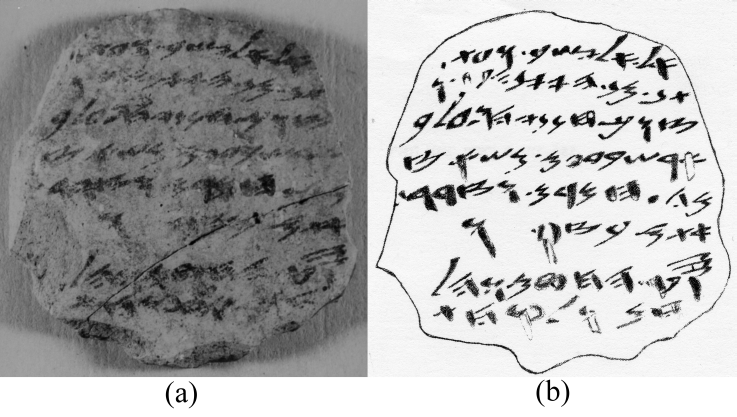}
		\par\end{centering}
	\caption{Ostracon No. 3 from Tel Arad: (a) grayscale image; (b) handmade facsimile of the ostracon. The facsimile was originally published in \cite{AharoniArad}. \label{fig:OstFacs}}
\end{figure}

In numerous document analysis tasks, the initial procedure is the creation of a binary image, separating the foreground (characters) from the background. However, in cases of severely damaged documents, such as the ones we are dealing with, the existing algorithms fail to give a satisfactory result (see \cite{shaus2012bin} for example). The reason for this lies in the following facts: (a) the original writing is occasionally erased and therefore its binarization would justifiably be incomplete; (b) the imperfections induced by the long deterioration process might seem as an authentic signal, hence the result would contain unwanted segments (Fig. \ref{fig:OstBin}). Moreover, in contrary to common document analysis systems, where the binarization is just a pre-processing step in a pipeline, the characters' restorations themselves are of great importance to the field of Iron Age Hebrew epigraphy.

\begin{figure}[ht]
	\begin{centering}
		\includegraphics[width={\linewidth}]{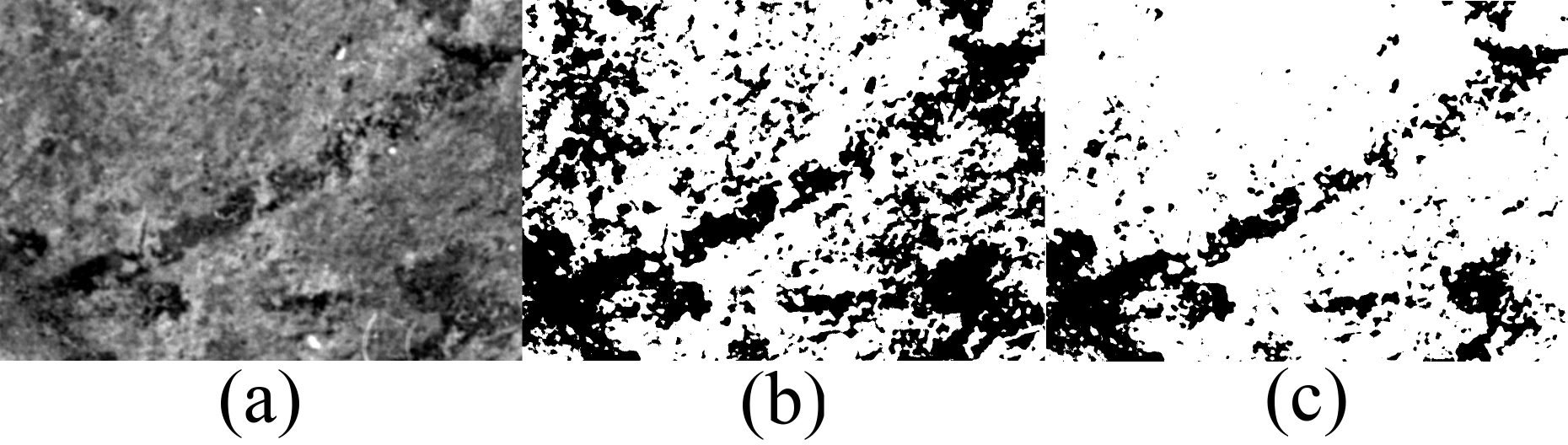}
		\par\end{centering}
	\caption{Example of a character taken from Arad ostracon No. 18: (a) grayscale image; (b) binarization using Otsu's method \cite{Otsu}; (c) binarization using Sauvola's method \cite{Sauvola}.\label{fig:OstBin}}
\end{figure}

Therefore, we have undertaken to tackle this issue by introducing mathematical techniques from the fields of computer aided design (CAD) and image processing in order to create a semi-automatic approximation of the original characters. Naturally, the resulting restorations will have common characteristics, which will simplify possible future tasks of comparing between scripts originating from different periods or from different regions (i.e., paleography).

\begin{figure}
	\begin{centering}
		\includegraphics[width={\linewidth}]{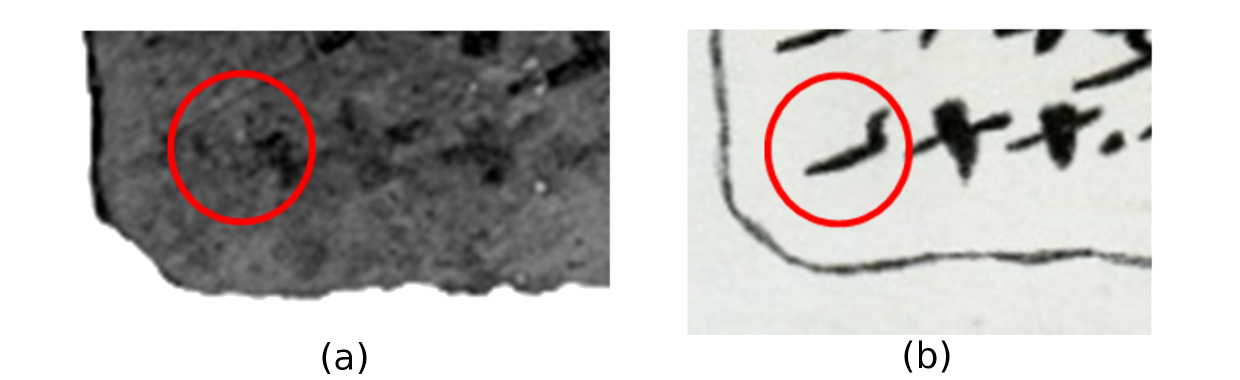}
		\par\end{centering}
	\caption{Ostracon No. 1 from Arad (a) original image; (b) facsimile of the zoomed region (painted by Y. Aharoni \cite{AharoniArad} courtesy of Israel Exploration Society). The marked letter cannot be recognized in the original image, and was completed according to the context of the word.\label{fig:ostrafacs_problem}}
\end{figure}

There have been other attempts to introduce mathematical tools to the field of epigraphy but this discipline of research is still in its infant steps (recent work on digital epigraphy can be found in \cite{wolf2011computerized, fischer2010codicology, diem2011recognizing, lavrenko2004holistic, Papa2014WriterIdentification2, Papa2009WriterIdentification1}, some works related to Iron Age Hebrew epigraphy can be found in \cite{ms2012joas, ms2014peq, msMalhata, shaus2012bin, shaus2012quality, GlyphEvaluation}). Nevertheless, those attempts are not connected directly to the techniques presented below. There are works dealing with reconstruction of damaged handwritten characters as a result of graphics removal (e.g., see the works \cite{lopresti2010ruling, abd2009page, arvind2007line, cao2002text}). However, the deficiencies dealt with in these articles are relatively easy to model, in contrary to the case of natural deterioration processes. Other research fields that are more closely related to our stroke restoration algorithm, are Approximation using splines \cite{de1978practical}; Active Contours \cite{kass1988snakes} and Variational methods in image processing \cite{mumford1989optimal, SSVM2011}.  

\section{Stroke Restoration}

\subsection{Modeling a stroke}

Many models have been proposed in the last decades to study human movement in general and handwritten strokes in particular. The various classical models are based on neural networks (e.g., \cite{schomaker1991simulation, gangadhar2007oscillatory}), equilibrium point models (e.g., \cite{feldman1966functional,feldman2005testing}), behavioral models (e.g., \cite{van1983independent}), kinematic models (e.g., \cite{plamondon1995kinematic, plamondon2007extraction}), and models relying on minimization principals (e.g., \cite{flash1985coordination, edelman1987model, wada1995theory}). For a comprehensive survey regarding such models see the introduction to the article \cite{plamondon2014recent}.  
These models aim at deciphering the way the underlying human cognitive system, generating the movement, works. Instead, we aim at approximating an already existing incomplete stroke. Therefore, it is sufficient to use a far simpler representation of the stroke, disregarding the sophisticated mechanism used to create it (see \cite{ICDAR2005SplineStroke} for such a stroke definition - closely related to the one we will use here). 

Keeping the simplicity in mind, a stroke can be referred to as a two-dimensional piecewise smooth curve in some parameter $t \in [a,b] $. However, such a representation ignores the stroke's thickness dimension,
which is related to the stance of the writing pen towards the document (in our case - potshard) and to the characteristics of the pen itself. In the case of Iron Age Hebrew it is well accepted that the scribes used reed pens, which have a flat top rather than pointed. This fact makes the writing thickness even more essential to the process of stroke restoration. 

In accordance to what have been stated above, the restoration of a stroke would be a mere approximation of a sampled thick curve. From here on, we shall use the term stroke in the following sense:  
\begin{definition}
	A \textit{stroke} is a piecewise-smooth part of a character with a specific
	radius of writing at each point, resulting from the act of writing.
	A stroke starts when the pen touches the surface of writing and ends
	when the pen is lifted. 
	\label{def:stroke}
\end{definition}
In mathematical terms we denote the \textit{stroke} as a set-valued function:
 
\begin{equation}
S(t)\defeq \{(p,q)\mid(p-x(t))^{2}+(q-y(t))^{2}\leq r(t)^{2}\} ~,~t \in [a,b]
\end{equation}
 
where $x(t)$ and $y(t)$ represent the coordinates of the center
of the pen at time $t$, and $r(t)$ stands for the pen's radius at $t$
(see Figure \ref{fig:Letter-E}). The corresponding \textit{stroke curve} is thus:
\[
\gamma(t)=(x(t),y(t),r(t))^T \qquad t\in[a,b]
,\]
 
while the \textit{skeleton of the stroke} will accordingly be the curve
 
\[\beta(t)=(x(t),y(t))^T \qquad t\in[a,b].\]

\begin{figure}
	\begin{centering}
		\includegraphics[width={0.6\linewidth}]{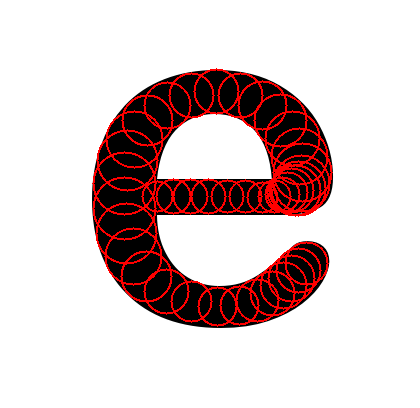}
		\par\end{centering}
	
	\caption{The character "e" comprised of discs. The discs painted in red over the character were created using the \textit{stroke restoration} algorithm (described in Algorithm \eqref{alg:StrokeRestoration}).\label{fig:Letter-E}}
	
\end{figure}

\subsection{Energy Minimization}

The use of energy functional minimization frameworks is widespread in the image processing literature. Two such canonical examples are the active contours \cite{kass1988snakes} and the Mumford-Shah framework \cite{mumford1989optimal}, both referring to the problem of segmentation. Borrowing the idea of minimizing an energy functional, we produce an analytic reconstruction of a stroke with respect to a given image $I(p,q) \quad \left( (p,q) \in [1,N]\times[1,M] \right)$. This reconstructed stroke $S^{*}(t)$, corresponding to the stroke-curve $\gamma^{*}(t)=(x^{*}(t),y^{*}(t),r^{*}(t))^T$, is the one that minimizes the following functional:
 
\begin{equation}
		F[\gamma(t)]=
		c_{1}\intop_{0}^{T}\frac{G_{I}(t)}{(r(t))^{2}}dt+c_{2}\intop_{0}^{T}\frac{1}{\sqrt{r(t)}}dt 
		+c_{3} \sum\limits_{k=0}^{N-1} \intop_{t_k + \epsilon}^{t_{k+1} - \epsilon}|K(\dot{x},\dot{y},\ddot{x},\ddot{y})|dt
	\label{eq:final_functional}
	.\end{equation}

\begin{equation}
	\gamma^{*}(t)=\argmin_{\gamma(t)}\{F[\gamma(t)]\}
	\label{eq:argmin_func}
	,\end{equation}
 
where $G_{I}(t)=\sum_{(p,q)\in S(t)} I(p,q)$ is the summation of the gray level values of the image $I$ inside the disc $S(t)$; $\dot{x}, \ddot{x}$ and $\dot{y}, \ddot{y}$ denote the first and second derivatives of $x$ and $y$ with respect to the parameter $t$; $K(\dot{x},\dot{y},\ddot{x},\ddot{y})=\frac{\dot{x}\ddot{y}-\dot{y}\ddot{x}}{(\dot{x}^{2}+\dot{y}^{2})^{3/2}}$ stands for the curvature of the skeleton of the stroke $\beta(t)$; $0<c_{1},c_{2},c_{3}\in\mathbb{R}$ are balancing parameters; and $0<\epsilon \in\mathbb{R}$ is of a small magnitude. 

The reconstruction is subject to initial and boundary conditions at $t_k$ manually sampled when one of the following terms hold: a) beginning and end of strokes; b) intersections of strokes; c) significant extremal points of the curvature; d) points with no traces of ink. For more details regarding the manual sampling see Section \eqref{sec:TheAlgorithm}.

\textit{First and Second Terms \textendash{} Fidelity:} We denote the right hand side elements of equation \eqref{eq:final_functional} by:
 
\[
F_{E}^{(S)} = c_{1}\intop_{0}^{T}\frac{G_{I}(t)}{(r(t))^{2}}dt
,\]
 
 and
 
\[
F_{E}^{(r)} = c_{2}\intop_{0}^{T}\frac{1}{\sqrt{r(t)}}dt
.\]
 
The motivation behind these terms is to force the minimum of the functional to correspond to the original image of the stroke. $F_{E}^{(S)}$ is an integration of the gray level values over the reconstructed stroke which should be low if the reconstructed stroke covers the image of the stroke, while $F_{E}^{(r)}$ is lower when the radii of the discs become larger. Thus preventing the solution from collapsing to the null set at each $t$. The balance between these two terms is of great importance and therefore a detailed discussion regarding this issue is presented in Appendix A.

\textit{Third Term \textendash{} Internal Energy:} As opposed to the first and second terms, the third term on the right hand side of equation \eqref{eq:final_functional} 
 
\[
F_K = c_{3} \sum\limits_{k=0}^{N-1} \intop_{t_k + \epsilon}^{t_{k+1} - \epsilon}|K(\dot{x},\dot{y},\ddot{x},\ddot{y})|dt
,\]
 
does not relate to the image of the handwriting, but to the approximation of the stroke's skeleton alone. This term would have been redundant had we an ideal image of the stroke, as the restoration would be a close fit to the image. However, in our case where the image is severely damaged and the stroke is often incomplete, it is vital to control the smoothness of the curve. Accordingly, this term limits the possibility of high curvature areas and hence keeps the curve smoother.

We stated above that a stroke starts where the pen touches the surface and ends where the pen is lifted (see Definition \ref{def:stroke}). Therefore, using the integral over the curvature of the stroke's skeleton could prove to be problematic, as there might exist local curvature maxima and corners (i.e., discontinuity of the second derivative). In order to avoid the "smoothing" of desired corners the skeleton's curvature is ignored in $\epsilon$-neighborhoods of the sampled points.

\section{Algorithms}
Prior to presenting our algorithm, we begin by narrowing down the domain of possible solutions for the minimization problem presented above. Next, following an analysis of the reduced problem, we provide a detailed description of the \textit{stroke restoration} algorithm. We conclude this section with some remarks regarding possible improvements.

\subsection{Limiting The Solutions Domain}
In order to simplify the minimization problem at hand, we restrict the space of possible reconstructing curves $\gamma (t) = \left( x(t), y(t), r(t) \right)^T$ to \textbf{natural cubic splines} (with respect to some nodes $\lbrace t_j \rbrace_{j=-1}^{n+1}$). It is noteworthy that this restriction is adequate since such curves minimize the integral over the squared magnitude of the second derivative, and therefore controls the curvature as well (e.g., see \cite{SplineMinimalSecondDerivative}). This quality fits the third term in equation \eqref{eq:final_functional}. Accordingly, we can now represent such a curve in the cubic B-Spline basis for equally distributed nodes:
 
\begin{equation}
	\gamma (t) 
	= \sum\limits_{j=-1}^{n+1} \left( c_j ^{x}, c_j ^{y} , c_j ^{r} \right)^{T} B_j (t) 
	= \sum\limits_{j=-1}^{n+1} \vec{c} B_j (t)
	\label{eq:spline_rep}
	,\end{equation}
 
Using this representation, each curve is now described in full by $n+3$ coefficients vectors $\vec{c_i}$ which will be referred to henceforth as \textit{control points}. Thus, we can rewrite equation \eqref{eq:final_functional} as a function of $d = 3 \cdot (n+3)$ variables:
 
\begin{equation}
	\begin{split}
		F[\gamma]
		&= F \left( c_{-1}^{x} , c_{-1}^{y} , c_{-1}^{r} , ... , c_{n+1}^{x} , c_{n+1}^{y} , c_{n+1}^{r} \right) \\
		&= F \left( y_1 , ... , y_d\right)
	\end{split}
	\label{eq:functional_for_grad}
	,\end{equation}
 
where $\gamma (t)$ is defined by \eqref{eq:spline_rep}, and $y_1 , ... , y_d$ are the variables of $F$ corresponding to the relevant indices. Accordingly, we have:
 
\[
\left(  
\begin{array}{c}
y_{3(i + 2) - 2} \\
y_{3(i + 2) - 1} \\
y_{3(i + 2)}
\end{array}  
\right) = 
\left(  
\begin{array}{c}
c_{i}^{x} \\
c_{i}^{y} \\
c_{i}^{r}
\end{array}  
\right) = 
\vec{c}_{i}
\]

Following this rationale, given an initial guess we can apply the Gradient Descent method in order to search for the optimal solution. The convergence of the algorithm is not guaranteed theoretically, as we do not know whether the problem is convex in the general setting. Nevertheless, our experiments with a slightly modified Gradient Descent (described in the following passages) showed that given a piecewise linear first guess (using just a few manually sampled points), the algorithm yielded reliable restorations. For more details please see Sections \ref{sec:TheAlgorithm} and \ref{sec:Experimental}.

The classical Gradient Descent algorithm for unconstrained optimization of $F:\mathbb{R}^d \rightarrow \mathbb{R}$ is defined by the following iterative process \cite{barzilai1988two}:
 \begin{equation}
	\label{eq:GradDescent}
	x_{k+1} = x_k - \alpha_k \nabla F(x_k)
	,\end{equation} 
and $\alpha_k \in \mathbb{R}$ is given by
 \[
\alpha_k = \argmin_{\alpha} \lbrace F \left( x_k - \alpha \nabla F(x_k) \right) \rbrace
.\] 

Unfortunately, applying this method to our case proved to be inefficient as it converges to insignificant local minima stemming from the noise of the given data. As a consequence, we used a
different step-size for each element in our domain so that equation \eqref{eq:GradDescent} now becomes:
 \begin{equation}
	x_{k+1} = x_k - \left( \alpha_1^{(k)} \frac{\partial F}{\partial y_1}(x_k) , ... , \alpha_d^{(k)} \frac{\partial F}{\partial y_d}(x_k) \right)
	\label{eq:GradDescentAltered}
	,\end{equation} 
We start by assigning some initial values to $\alpha_i^{(k)}$ and update the values at each iteration according to:
 \[
\alpha_i^{(k+1)} = 
\begin{cases}
\alpha_i^{(k)} \ & \frac{\partial F_{y_i}}{\partial y_i}(x_k) \cdot \frac{\partial F}{\partial y_i}(x_{k-1})  >0  \\ 
\alpha_i^{(k)} \cdot T \ & otherwise
\end{cases}
,\] 
where $T \in (0,1)$. This way, each time the sign of the directional derivative changes, $\alpha_i^{(k)}$ decreases so that the steps would be smaller in that direction. This method has proven empirically to be significantly more stable.

\subsection{Performing Gradient Descent and maintaining interpolation} 
As we wish to present a semi-automatic procedure, there is a need to maintain the interpolation at the initial and boundary points, sampled by the user. For this end, the Gradient Descent steps should not interfere with the interpolation conditions. In what follows, we develop a way to perform Gradient Descent steps with respect to the spline's \textit{control points}, while maintaining the interpolation at the desired points.

Applying the standard Gradient Descent steps described in equation \eqref{eq:GradDescent} to our case (i.e., using the representation from \eqref{eq:functional_for_grad}) results in the following iterative process:
 \begin{equation}
	\label{eq:InitStrokeIter}
	\vec{C}^{(k+1)} = \vec{C}^{(k)} - \alpha^{(k)} \nabla F(\vec{C}^{(k)})
	,\end{equation} 
where we denote 
 \[ \vec{C}^{(k)} =  
\left( \begin{array}{ccc}
| &  & | \\
\vec{c}_{-1}^{(k)} & \cdots & \vec{c}_{n+1}^{(k)} \\
| &  & | 
\end{array} \right)
, \] 
$\vec{c}_{-1}^{(k)} , ... , \vec{c}_{n+1}^{(k)}$ are the control points at the $k^{th}$ iteration (i.e., $ \gamma ^{(k)} (t) = \sum\limits_{j=-1}^{n+1}\vec{c}_j^{(k)} B_j(t) $ ), and
 \[
\nabla F(\vec{C}^{(k)}) = 
\left( \begin{array}{ccc}
\frac{\partial F}{\partial c_{-1}^{x}}(\vec{C}^{(k)}) & \cdots & \frac{\partial F}{\partial c_{n+1}^{x}}(\vec{C}^{(k)}) \\
\frac{\partial F}{\partial c_{-1}^{y}}(\vec{C}^{(k)}) & \cdots & \frac{\partial F}{\partial c_{n+1}^{y}}(\vec{C}^{(k)})  \\
\frac{\partial F}{\partial c_{-1}^{r}}(\vec{C}^{(k)}) & \cdots & \frac{\partial F}{\partial c_{n+1}^{r}}(\vec{C}^{(k)}) \end{array} \right)
.\] 

It is easy to verify that in order to satisfy $ \gamma ^{(k)} (t_i) = \vec{f}_i $ for some node at $t_i$ at the $k^{th}$ iteration, the control points must maintain the following relation:
 \begin{equation}
	\frac{1}{6} \vec{c}_{i-1}^{(k)} + \frac{2}{3} \vec{c}_{i}^{(k)} + \frac{1}{6} \vec{c}_{i+1}^{(k)}  = \vec{f}_i
	.\end{equation} 
Hence, if we maintain this condition at each iteration we have
\small
 \[
\begin{split}
&\frac{1}{6} ( \vec{c}_{i-1}^{(k)} - \vec{c}_{i-1}^{(k+1)} ) + 
\frac{2}{3} ( \vec{c}_{i}^{(k)} - \vec{c}_{i}^{(k+1)} ) \\
&+ \frac{1}{6} (\vec{c}_{i+1}^{(k)} - \vec{c}_{i+1}^{(k+1)}) 
= \vec{0}
\end{split}.\] 
\normalsize
So, by denoting $\vec{\delta}_i = \vec{c}_i^{(k)} - \vec{c}_i^{(k+1)}$ we get:
 \[
\frac{1}{6} \vec{\delta}_{i-1} + 
\frac{2}{3} \vec{\delta}_{i} + 
\frac{1}{6} \vec{\delta}_{i+1}  = \vec{0}
\] 
 \begin{equation}
	\label{eq:deltaRelation}
	\vec{\delta}_{i-1} + 
	4 \vec{\delta}_{i} + 
	\vec{\delta}_{i+1}  = \vec{0}
\end{equation} 

However, by rewriting equation \eqref{eq:InitStrokeIter} we get the following relation:
 \[
\vec{C}^{(k+1)} - \vec{C}^{(k)}=  - \alpha^{(k)} \nabla F(\vec{C}^{(k)})
\] 
 \[
\left( \begin{array}{ccc}
| & & | \\
\vec{c}_{-1}^{(k+1)} - \vec{c}_{-1}^{(k)} & \cdots & \vec{c}_{n+1}^{(k+1)} - \vec{c}_{n+1}^{(k)} \\
| & & |
\end{array} \right) = 
-\alpha^{(k)} \left( \begin{array}{ccc}
\frac{\partial F}{\partial c_{-1}^{x}}(\vec{C}^{(k)}) & \cdots & \frac{\partial F}{\partial c_{n+1}^{x}}(\vec{C}^{(k)}) \\
\frac{\partial F}{\partial c_{-1}^{y}}(\vec{C}^{(k)}) & \cdots & \frac{\partial F}{\partial c_{n+1}^{y}}(\vec{C}^{(k)})  \\
\frac{\partial F}{\partial c_{-1}^{r}}(\vec{C}^{(k)}) & \cdots & \frac{\partial F}{\partial c_{n+1}^{r}}(\vec{C}^{(k)}) \end{array} \right)
.\]  
Hence, we can rewrite $\vec{\delta}_i$ as:
 \[
\vec{\delta}_i = 
\alpha^{(k)} \left( \begin{array}{c}
\frac{\partial F}{\partial c_{i}^{x}}(\vec{C}^{(k)}) \\
\frac{\partial F}{\partial c_{i}^{y}}(\vec{C}^{(k)}) \\
\frac{\partial F}{\partial c_{i}^{r}}(\vec{C}^{(k)})
\end{array} \right)
.\] 
Accordingly, we can reinterpret equation \eqref{eq:deltaRelation} as a condition over the directional derivatives of our function $F$. Explicitly, we get the following condition for $x , y$ and $r$ independently:
 \[
\frac{\partial F}{\partial c_{i-1}}+ 
4 \frac{\partial F}{\partial c_{i}} + 
\frac{\partial F}{\partial c_{i+1}}  = 0
\] 

This relation restricts our problem and reduces the dimensionality of the derivation directions in our gradient. Thus, we should not find the gradient with respect to the standard directions but use directions that do not interfere with the interpolation conditions. 

If we represent the standard directions as direction vectors we get the standard derivation basis:
 \[
\left( \begin{array}{cccc}
1 & 0 & \cdots & 0 \\
0 & 1 & \ddots & \vdots \\
\vdots & \ddots & \ddots & 0 \\
0 & \cdots & 0 & 1 
\end{array} \right)
.\] 
Thus, in order to uphold the interpolatory conditions in the $i^{th}$ node, instead of using the three standard directions:
\small
 \[
\begin{array}{cccc}
(i) &
\left[ \begin{array}{c} 
0 \\ \vdots \\ 1 \\ 0 \\ 0 \\ \vdots \\ 0
\end{array} \right] &
\left[ \begin{array}{c} 
0 \\ \vdots \\ 0 \\ 1 \\ 0 \\ \vdots \\ 0
\end{array} \right] &
\left[ \begin{array}{c} 
0 \\ \vdots \\ 0 \\ 0 \\ 1 \\ \vdots \\ 0
\end{array} \right]
\end{array}
,\] 
\normalsize
we would use the following directions:
\small
 \begin{equation}
	\label{eq:derivationBasis}
	\begin{array}{ccc}
		(i) &
		\left[ \begin{array}{c} 
			0 \\ \vdots \\ 1 \\ - \frac{1}{4} \\ 0 \\ \vdots \\ 0
		\end{array} \right] &
		\left[ \begin{array}{c} 
			0 \\ \vdots \\ 0 \\ - \frac{1}{4} \\ 1 \\ \vdots \\ 0
		\end{array} \right] 
	\end{array}
	.\end{equation} 
\normalsize
Moving along these directions will ensure the interpolation criterion is kept. This way for each constraint we lose a dimension in the derivation basis and change the directions accordingly. 

\begin{proposition}
	\label{prop:interpolatingStep}
Let $\gamma(t) = \sum\limits_{j=-1}^{n+1} c_j B_j(t)$ be a spline curve with respect to the nodes $t_j$ (where $B_j(t)$ are the B-spline basis functions), and assume $\gamma(t_i) = f_i$ at some node $t_i$. Then changing the coefficients $c_{i-1} , c_i , c_{i+1}$ along the directions presented in \eqref{eq:derivationBasis} does not affect the interpolation at $t_i$.
\end{proposition}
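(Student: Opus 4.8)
The plan is to reduce the claim to a one-line computation using the values of the uniform cubic B-spline basis at a node. First I would invoke the property already used implicitly in deriving \eqref{eq:deltaRelation}: because cubic B-splines have local support, the only basis functions that do not vanish at $t_i$ are $B_{i-1}, B_i, B_{i+1}$, and they take the values $B_{i-1}(t_i) = \tfrac{1}{6}$, $B_i(t_i) = \tfrac{2}{3}$, $B_{i+1}(t_i) = \tfrac{1}{6}$. Consequently the evaluation map $\gamma \mapsto \gamma(t_i)$ is the linear functional $\tfrac{1}{6}c_{i-1} + \tfrac{2}{3}c_i + \tfrac{1}{6}c_{i+1}$ of the control points, which by hypothesis equals $f_i$. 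This isolates the three control points that could conceivably affect interpolation at $t_i$.

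Next I would translate ``moving along the directions of \eqref{eq:derivationBasis}'' into an explicit increment. Perturbing the coefficient vector by $s$ times the first direction and $u$ times the second changes only the triple $(c_{i-1}, c_i, c_{i+1})$, by the amount $(s, -\tfrac{1}{4}(s+u), u)$, and leaves every other control point fixed. Substituting this increment into the evaluation functional gives the change in $\gamma(t_i)$ as
\[ \frac{1}{6}s + \frac{2}{3}\Bigl(-\frac{1}{4}(s+u)\Bigr) + \frac{1}{6}u = \frac{1}{6}(s+u) - \frac{1}{6}(s+u) = 0, \]
so the value at $t_i$, and therefore the interpolation condition $\gamma(t_i)=f_i$, is left unchanged.

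The conceptually cleanest presentation is to observe that each direction separately lies in the kernel of the evaluation functional: for the first, $\tfrac{1}{6}\cdot 1 + \tfrac{2}{3}\cdot(-\tfrac{1}{4}) = 0$, and for the second, $\tfrac{2}{3}\cdot(-\tfrac{1}{4}) + \tfrac{1}{6}\cdot 1 = 0$; linearity of $\gamma \mapsto \gamma(t_i)$ then handles arbitrary combinations automatically. I expect no genuine obstacle here: the entire content is the observation that the increment vectors $(1,-\tfrac14,0)$ and $(0,-\tfrac14,1)$ are orthogonal to the weight vector $(\tfrac16,\tfrac23,\tfrac16)$. The only point meriting explicit mention is the local-support fact that restricts $\gamma(t_i)$ to those three control points, which is precisely the relation underlying \eqref{eq:deltaRelation}.
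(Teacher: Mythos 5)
Your proposal is correct and follows essentially the same route as the paper: both reduce the claim to the identity $\frac{1}{6}\tilde{c}_{i-1} + \frac{2}{3}\tilde{c}_{i} + \frac{1}{6}\tilde{c}_{i+1} = \frac{1}{6}c_{i-1} + \frac{2}{3}c_{i} + \frac{1}{6}c_{i+1}$ and verify it by substituting the perturbation $(s,-\tfrac{1}{4}(s+u),u)$. Your additional remarks on local support and on each direction lying in the kernel of the evaluation functional are a slightly more explicit justification of what the paper states as ``easily verified,'' but the argument is the same.
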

\begin{proof}
We wish to verify whether the interpolation is kept by $\tilde{\gamma}(t) = \sum\limits_{j=-1}^{n+1} \tilde{c}_j B_j(t)$, where $\tilde{c}_j = c_j$ for $j \neq i-1, i , i+1$ and 
	 \begin{equation}
		\label{eq:stepRelation}
		\begin{array}{l}
			\tilde{c}_{i-1} = c_{i-1} + \alpha \\
			\tilde{c}_{i} = c_{i} - \frac{1}{4} \alpha  - \frac{1}{4} \beta\\
			\tilde{c}_{i+1} = c_{i+1} + \beta
		\end{array}
		.\end{equation} 
	In order to do so it is sufficient to verify that:
	 \[
	\frac{1}{6} \tilde{c}_{i-1} + \frac{2}{3} \tilde{c}_{i} + \frac{1}{6} \tilde{c}_{i+1} = \frac{1}{6} c_{i-1} + \frac{2}{3} c_{i} + \frac{1}{6} c_{i+1}
	,\] 
	and this is easily verified from equation \eqref{eq:stepRelation}.
\end{proof}

\begin{corollary}
	Using both the original and the slightly altered Gradient Descent steps described in equations \eqref{eq:GradDescent} and \eqref{eq:GradDescentAltered} (respectively) in the directions described in equation \eqref{eq:derivationBasis} uphold the interpolation conditions at the node $t_i$.
\end{corollary}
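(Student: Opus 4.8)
The plan is to reduce both update rules to the single-step perturbation already analysed in Proposition~\ref{prop:interpolatingStep}. The key observation is that Proposition~\ref{prop:interpolatingStep} places no restriction on the magnitudes $\alpha$ and $\beta$ accompanying the two directions in \eqref{eq:derivationBasis}: it certifies that interpolation at $t_i$ is preserved for \emph{every} pair $(\alpha,\beta)\in\mathbb{R}^2$. It therefore suffices to show that each descent variant, once restricted to the reduced directions \eqref{eq:derivationBasis}, displaces the triple $(c_{i-1},c_i,c_{i+1})$ in exactly the form \eqref{eq:stepRelation} for some real $\alpha$ and $\beta$.

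First I would treat the standard step \eqref{eq:GradDescent}. Expressing the gradient in the reduced derivation directions \eqref{eq:derivationBasis} rather than in the standard basis, the descent increment becomes a linear combination of the two direction vectors with coefficients $s_1=-\alpha_k D_1 F(\vec{C}^{(k)})$ and $s_2=-\alpha_k D_2 F(\vec{C}^{(k)})$, where $D_1 F$ and $D_2 F$ denote the directional derivatives of $F$ along those two vectors. Reading off the three affected entries of this combination reproduces the substitution \eqref{eq:stepRelation} with $\alpha=s_1$ and $\beta=s_2$, so Proposition~\ref{prop:interpolatingStep} applies verbatim.

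Next I would treat the component-wise step \eqref{eq:GradDescentAltered}. The only change is that the two directions now carry independent step sizes, so the displacement coefficients become $s_1=-\alpha_1^{(k)} D_1 F(\vec{C}^{(k)})$ and $s_2=-\alpha_2^{(k)} D_2 F(\vec{C}^{(k)})$. Because the pair $(\alpha,\beta)$ in \eqref{eq:stepRelation} is already free, scaling the two directions independently is harmless: the displacement still lies in the two-parameter family \eqref{eq:stepRelation}, so Proposition~\ref{prop:interpolatingStep} once more guarantees that the value $\gamma(t_i)$ is left unchanged.

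Since the statement is a direct consequence of Proposition~\ref{prop:interpolatingStep}, there is no genuinely hard step; the only point demanding attention is that the two vectors in \eqref{eq:derivationBasis} share support solely in the coordinate $c_i$ and are otherwise disjoint. This disjointness is precisely what ensures that assigning them independent step sizes in \eqref{eq:GradDescentAltered} cannot produce a displacement outside the family \eqref{eq:stepRelation}; after this remark both cases collapse to a single invocation of Proposition~\ref{prop:interpolatingStep}.
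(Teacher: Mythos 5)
Your argument is correct and is exactly the (implicit) reasoning the paper relies on: the corollary is stated without proof because any step confined to the span of the two directions in \eqref{eq:derivationBasis} produces a displacement of the form \eqref{eq:stepRelation} for some real $\alpha,\beta$, and Proposition~\ref{prop:interpolatingStep} covers all such pairs. One small remark: the disjoint-support observation at the end is not the essential point --- what matters is only that each direction vector annihilates the functional $v\mapsto\tfrac{1}{6}v_{i-1}+\tfrac{2}{3}v_i+\tfrac{1}{6}v_{i+1}$, hence so does any linear combination, whatever the individual step sizes.
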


\begin{remark}
	In order to be able to perform the above mentioned step, there must exist at least one non-interpolating node in between two nodes where we wish to satisfy the interpolation conditions.
\end{remark}

\begin{remark}
	The directional derivative at each point is approximated by the forward difference scheme:
	 \[
	\frac{\partial F}{\partial v} \approx \frac{F(x + hv) - f(x)}{h \lvert v\rvert}
	.\] 
	Therefore, we first normalize our derivation basis to get 
	 \[
	\frac{\partial F}{\partial v} \approx \frac{F(x + hv) - f(x)}{h}
	.\] 
\end{remark}

\subsection{Stroke Restoration}
\label{sec:TheAlgorithm}

\begin{algTitle}
	\textit{Stroke Restoration}
	\label{alg:StrokeRestoration}
	\begin{enumerate}
		\item A user manually selects the initial and boundary points for the desired stroke (see Figure \ref{fig:AlgorithmSteps}a)
		\item An initial piecewise linear spline, interpolating the sampled points, is constructed (see Figure \ref{fig:AlgorithmSteps}b). The interpolation points are regarded as node points. Between each pair of interpolated points another node is created (For example: if a user sampled three points we will have a total of 5 nodes). We define the curve parameter such that the nodes $t_i$ are in  $ \mathbb{Z} $. This way the nodes are equally distributed.
		\begin{enumerate}
			\item For example, the first manually sampled point will correspond to the node at $t_0=0$, the second point will be connected with the node at $t_2=2$ and in between them another node would be created at $t_1=1$
		\end{enumerate}
		\item The spline's control points are developed, using the Gradient Descent iterations described at \eqref{eq:GradDescentAltered} with respect to the cost functional described in \eqref{eq:functional_for_grad} while maintaining the interpolation at the sampled points.
		\begin{enumerate}
			\item We set minimal and maximal radius parameters and do not allow the coefficients $c_i^{r}$ to exceed these limits.
		\end{enumerate}
	\end{enumerate}
\end{algTitle}

\begin{figure}[ht]
	\begin{centering}
		\includegraphics[width={\linewidth}]{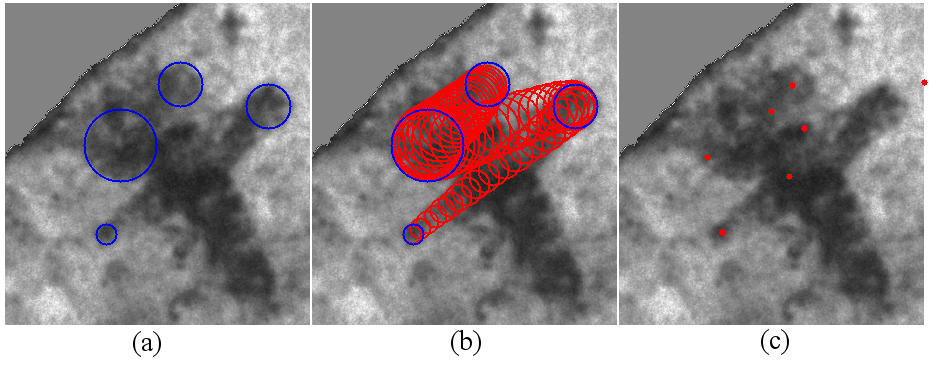}
		\par\end{centering}
	\caption{A character from Arad Ostracon No.1 (a) initial and boundary points sampled manually (b) the initial piecewise linear interpolation (c) the control points of the initial spline.}
	\label{fig:AlgorithmSteps}
\end{figure}

The manual selection of points is done in accordance to the following criteria:
\begin{enumerate}
	\item Choose the beginning and ending of strokes.
	\item Whenever there exist an intersection between two strokes sample it.
	\item Significant extremal points of the curvature should be sampled.
	\item It is preferred to sample areas where the ink is missing (i.e., points of discontinuity).
	\item If a stroke segment between two consecutive points is long then we should sample another one between them to allow the spline to represent the stroke as close to reality as possible.
\end{enumerate}
It is noteworthy that this process inserts a certain amount of subjectivity into the restoration, and can even be laborious, if a large number of characters are being worked. However, this obstacle can be overcome in the future, by automating the selection process (e.g., by using prior knowledge about the prototype of the character we are aiming to reconstruct).  

In Figure \ref{fig:StepbyStepIteration_waw} we can see the gradient descent steps described above applied to a "waw" character from Arad Ostracon No. 1. The initial and boundary points used in this run (marked in blue) and the initial piecewise linear spline are shown in Figure \ref{fig:StepbyStepIteration_waw}a. The algorithm  started from four manually sampled points and developed from a coarse approximation to a fine representation of the stroke.

\begin{figure}[ht]
	\begin{centering}
		\includegraphics[width={\linewidth}]{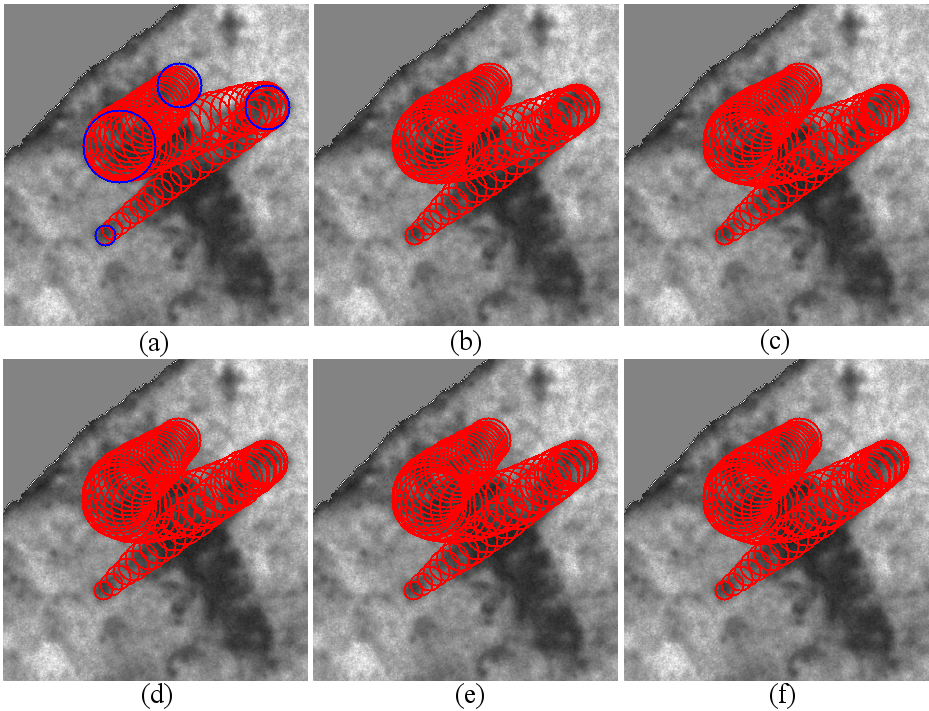}
		\par\end{centering}
	\caption{Restoration of a stroke from a "waw" character taken from Arad Ostracon No.1. (a) The initial piecewise-linear spline. Marked in blue are the manually sampled initial and boundary conditions. (b-f) The resulting spline after 3,9,12 and 14 gradient descent iterations correspondingly.}
	\label{fig:StepbyStepIteration_waw}
\end{figure}

Figure \ref{fig:CostFunctionPlot}d demonstrates the values of the cost function in equation \eqref{eq:final_functional} through the iterations performed for a single stroke restoration. Figures \ref{fig:CostFunctionPlot}a-c illustrates the energy terms, mentioned above, after factorization.

\begin{figure}[ht]
	\begin{centering}
		\includegraphics[width={\linewidth}]{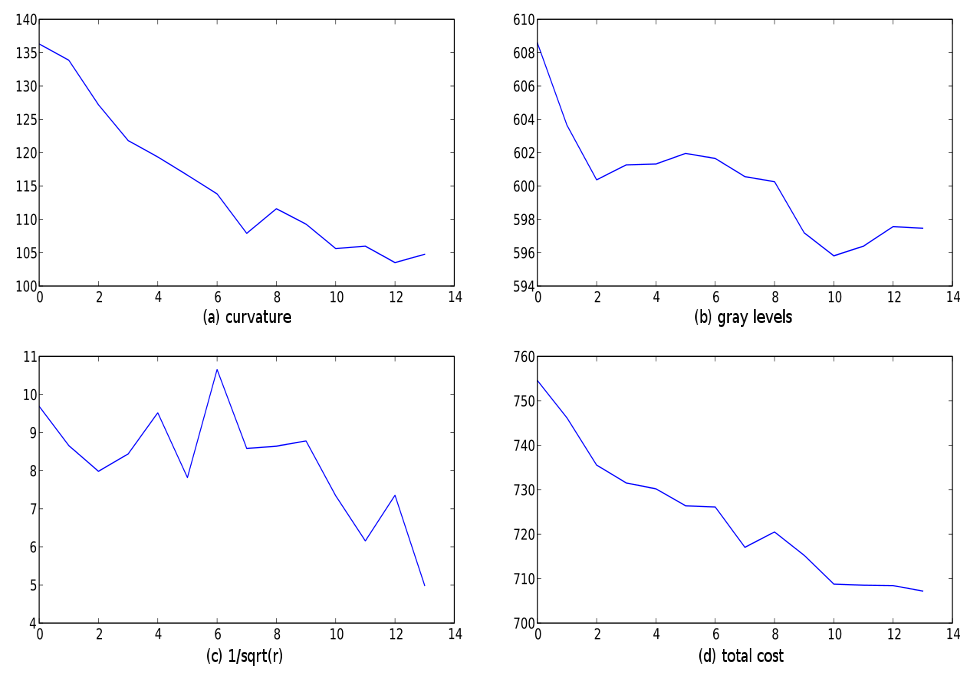}
		\par\end{centering}
	\caption{Development of the cost function terms of equation \eqref{eq:final_functional}. (a) the term $F_K$ which is the integral over the curvature (i.e., internal energy)  (b) the term $F_E^{(S)}$ which is the integration of the gray level under the curve. (c) the term $F_E^{(r)}$ assuring the growth of the radii. (d) the cost functional.}
	\label{fig:CostFunctionPlot}
\end{figure}

\section{Experimental Results}
\label{sec:Experimental}
The \textit{stroke restoration} algorithm, was designed in order to overcome the difficulties of the very noisy ancient-documents media. The basic motivation for this research comes from First Temple period Hebrew inscriptions which are written in ink over clay shards (i.e, ostraca). We have, therefore, put our method to a test using various characters taken from several ostraca as input data. These inscriptions were excavated in Tel Arad during the 1960's by Yohanan Aharoni \cite{AharoniArad} and are dated to the beginning of the $6^{th}$ century BCE (see for example \cite{AharoniArad,na2011elyashib}).

In all of the following experiments we used the same parameters. This was successful as we first performed histogram stretching and only then applied Algorithm \ref{alg:StrokeRestoration} to the enhanced image (see Figure \ref{fig:AlepHoriz}c). Thus, the differences of brightness and contrast between various images do not affect the scaling of our cost function. This step was not necessary but it saved the time of balancing our parameters for each and every image. The parameters used for a $\sim 350 \times 350$ pixels character were: \textit{maximal radius} = 50 ; \textit{minimal radius} = 3 ; the cost function parameters from equation \eqref{eq:final_functional} $c_1=2 , c_2=2000 , c_3=50$  (they were set by trial and error). The number of iterations performed was 14.

Examples of reconstructed strokes are presented in Figures \ref{fig:AlepHoriz} and \ref{fig:AlepVert}. In both cases the strokes belong to the same "alep" character from Arad Ostracon No.1. In the example presented in Figure \ref{fig:AlepHoriz} we can see that where two written lines overlap there is a need to sample delicately in order to preserve the separation between the two lines. Noteworthy is the authenticity of both restorations.

\begin{figure}[ht]
	\begin{centering}
		\includegraphics[width={\linewidth}]{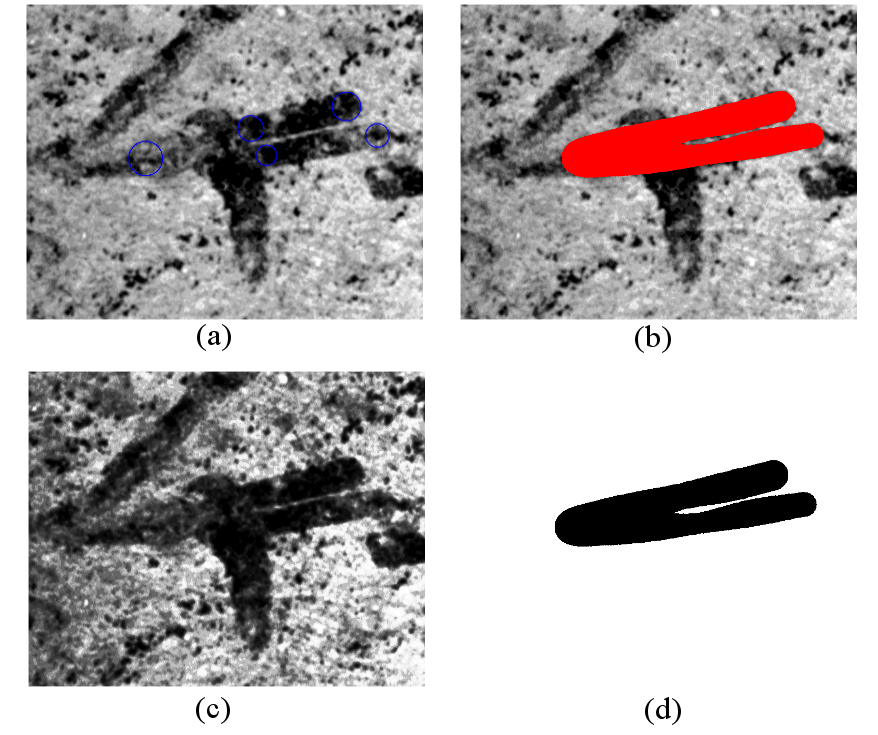}
		\par\end{centering}
	\caption{First stroke of an "alep" character taken from Arad Ostracon No.1. (a) the initial and boundary points. (b) the resulting spline after performing 14 steps of the gradient descent. (c) the character's image after histogram stretching. (d) the resulting binariztion.}
	\label{fig:AlepHoriz}
\end{figure}

\begin{figure}[ht]
	\begin{centering}
		\includegraphics[width={\linewidth}]{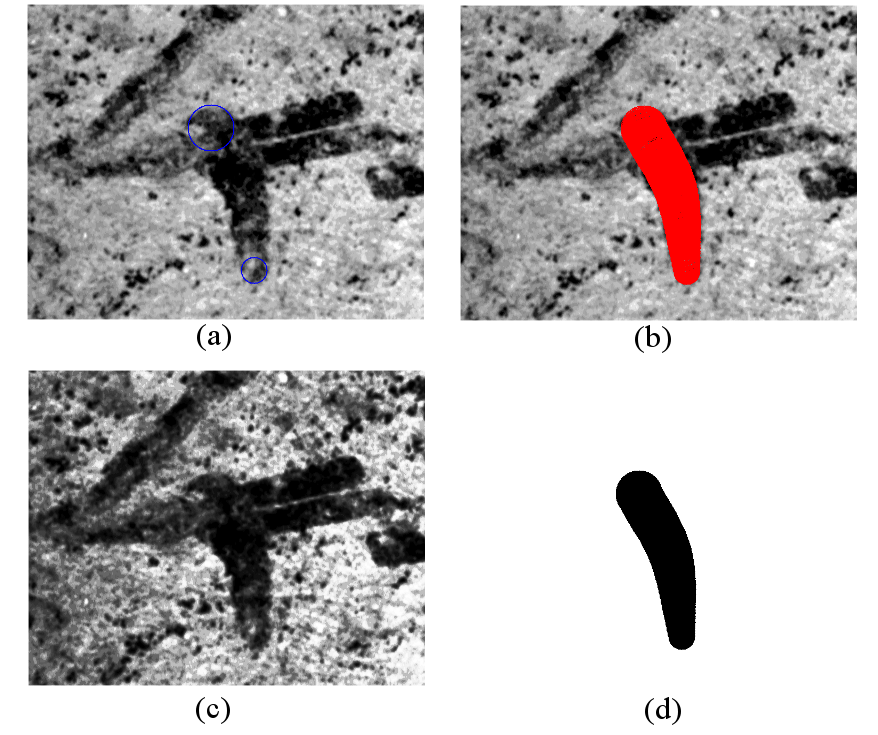}
		\par\end{centering}
	\caption{Second stroke of an "alep" character taken from Arad Ostracon No.1. (a) the initial and boundary points. (b) the resulting spline after performing 14 steps of the gradient descent. (c) the character's image after histogram stretching. (d) the resulting binariztion.}
	\label{fig:AlepVert}
\end{figure}

By overlaying the two reconstructed strokes one over the other we get a very clean binarization (Figure \ref{fig:AlepCombined}c). A comparison between the fully reconstructed "alep" character to the handmade facsimile is presented in Figure \ref{fig:AlepCombined}. It is apparent that the restoration gives a favorable result. Moreover, the reconstructed stroke has an analytic representation which enables us to deduce accurate mathematical properties of the curve, such as the curvature.

\begin{figure}[ht]
	\begin{centering}
		\includegraphics[width={\linewidth}]{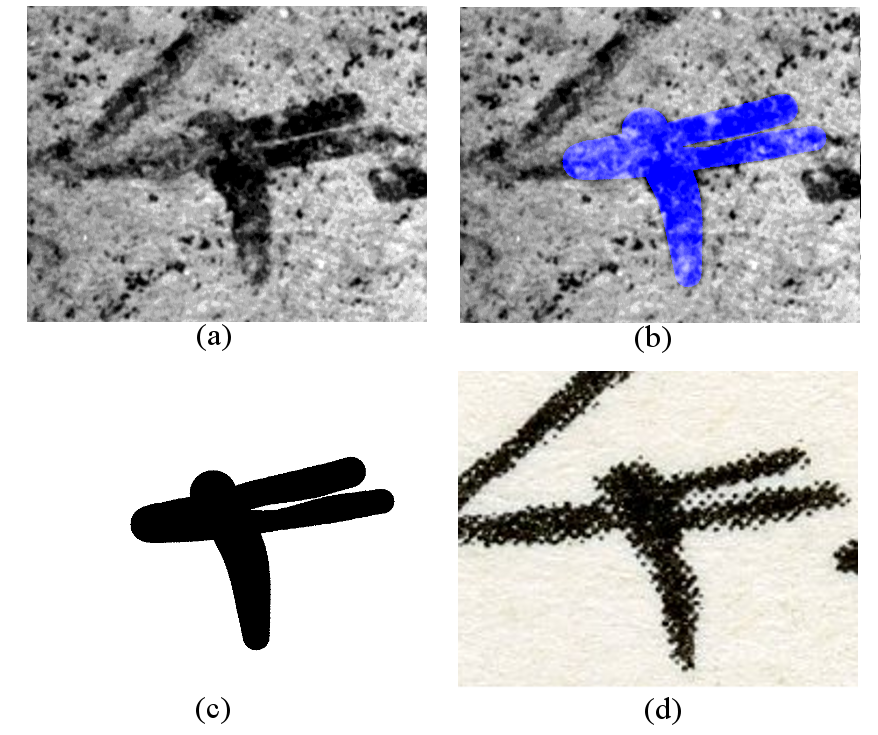}
		\par\end{centering}
	\caption{Restoration of a complete “alep” character (a) the original image (b) the restoration overlaid upon the image (c) the two reconstructed binary strokes overlaid (d) a handmade facsimile drawn by Yohanan Aharoni \cite{AharoniArad}.}
	\label{fig:AlepCombined}
\end{figure}

Another reconstructed character from the same ostracon is presented in the same fashion in figures \ref{fig:ShinLeft}-\ref{fig:ShinCombined}. This time the reconstructed character is "shin". For other examples of reconstructed "waw" characters see \ref{fig:Waws}

\begin{figure}[ht]
	\begin{centering}
		\includegraphics[width={\linewidth}]{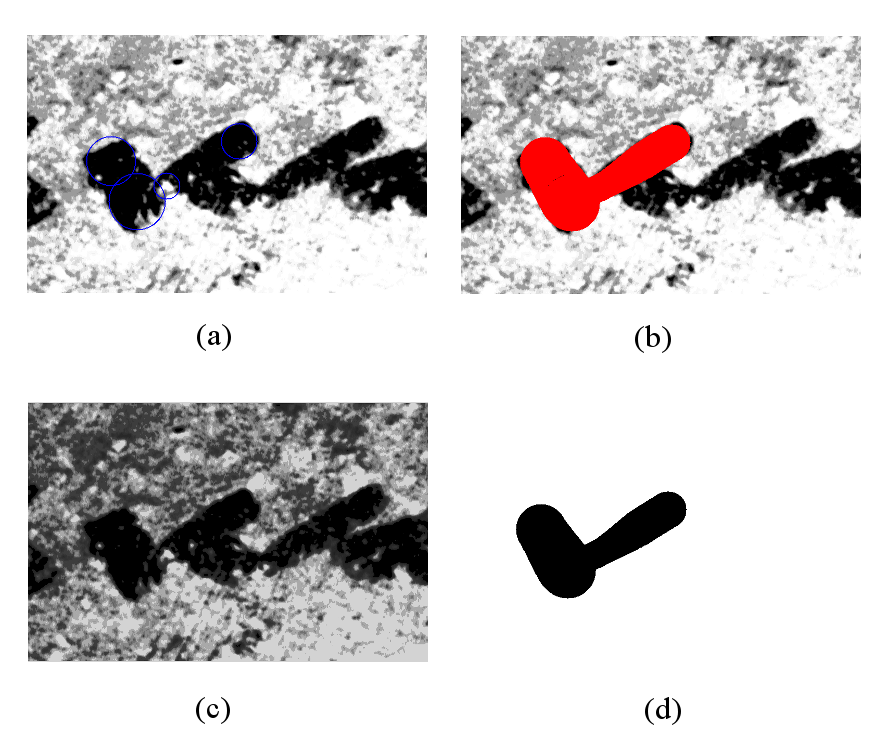}
		\par\end{centering}
	\caption{First stroke of a "shin" character taken from Arad Ostracon No.1. (a) the initial and boundary points. (b) the resulting spline after performing 14 steps of the gradient descent. (c) the character's image after histogram stretching. (d) the resulting binariztion.}
	\label{fig:ShinLeft}
\end{figure}

\begin{figure}[ht]
	\begin{centering}
		\includegraphics[width={\linewidth}]{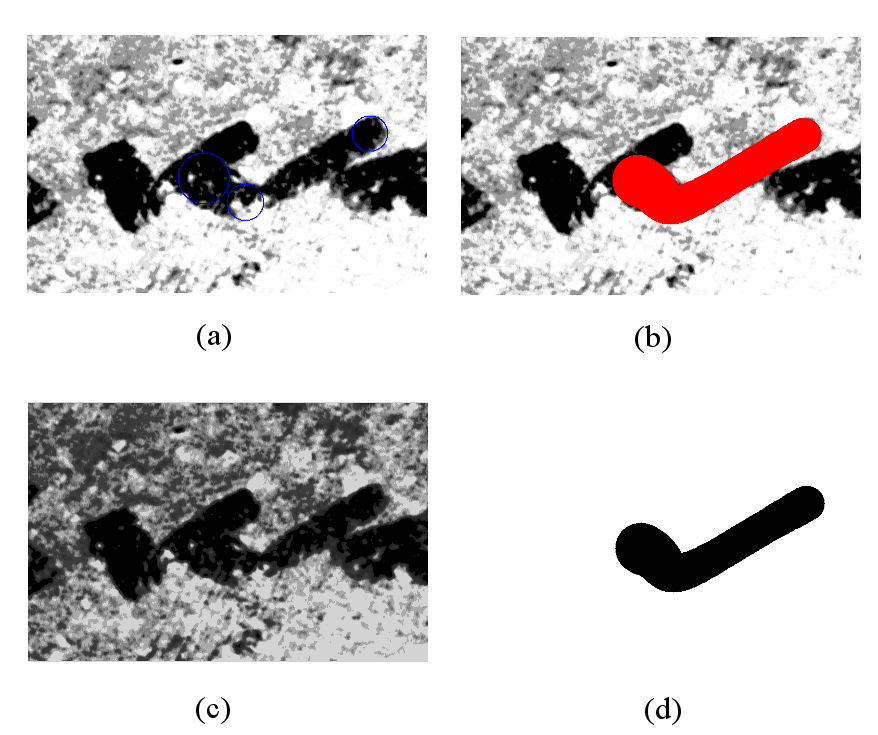}
		\par\end{centering}
	\caption{Second stroke of a "shin" character taken from Arad Ostracon No.1. (a) the initial and boundary points. (b) the resulting spline after performing 14 steps of the gradient descent. (c) the character's image after histogram stretching. (d) the resulting binariztion.}
	\label{fig:ShinRight}
\end{figure}

\begin{figure}[ht]
	\begin{centering}
		\includegraphics[width={\linewidth}]{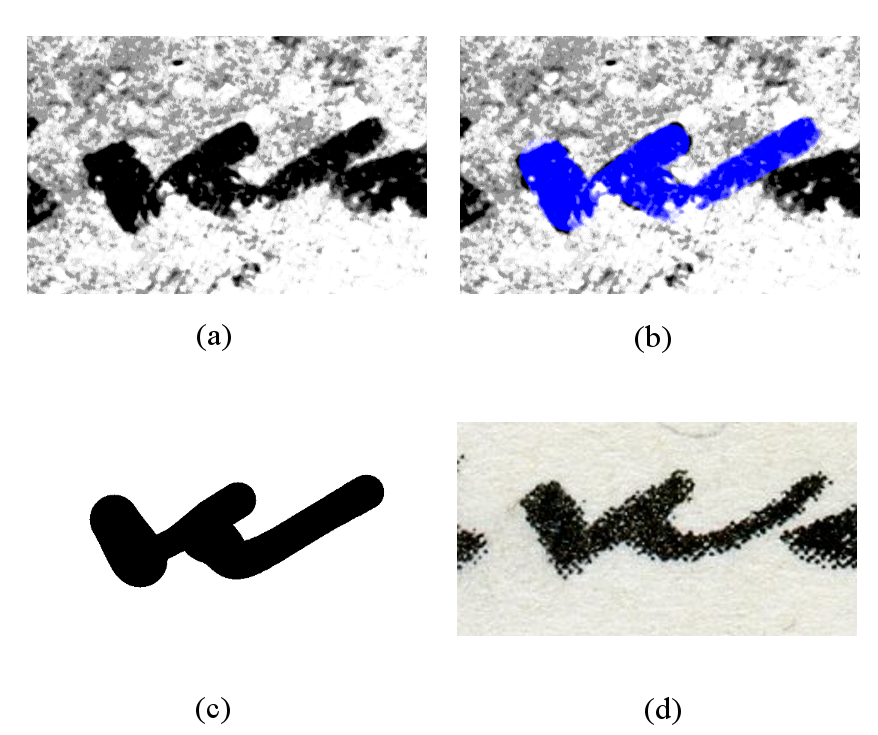}
		\par\end{centering}
	\caption{Restoration of a complete "shin" character (a) the original image (b) the restoration overlaid upon the image (c) the two reconstructed binary strokes overlaid (d) a handmade facsimile drawn by Yohanan Aharoni \cite{AharoniArad}.}
	\label{fig:ShinCombined}
\end{figure}

\begin{figure}[ht]
	\begin{centering}
		\includegraphics[width={\linewidth}]{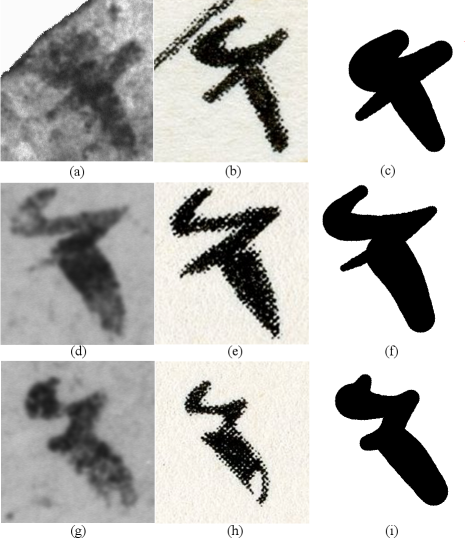}
		\par\end{centering}
	\caption{Restoration of three "waw" characters using Algorithm 1 the first line from Arad Ostracon No. 1 and the second and third lines are from Arad Ostracon No. 2: (a,d,g) are the original images of the characters; (b,e,h) are the handmade facsimiles; (c,f,i) are the restorations of Algorithm \eqref{alg:StrokeRestoration}.}
	\label{fig:Waws}
\end{figure}

Furthermore, the stroke restoration algorithm was utilized by epigraphers to produce a new reconstruction of an entire inscription. The inscription, which was found in the excavations of the City of David at Jerusalem (i.e., Ophel) is presented in Fig. \ref{fig:Opehl}. 

\begin{figure}[ht]
	\begin{centering}
		\includegraphics[width={\linewidth}]{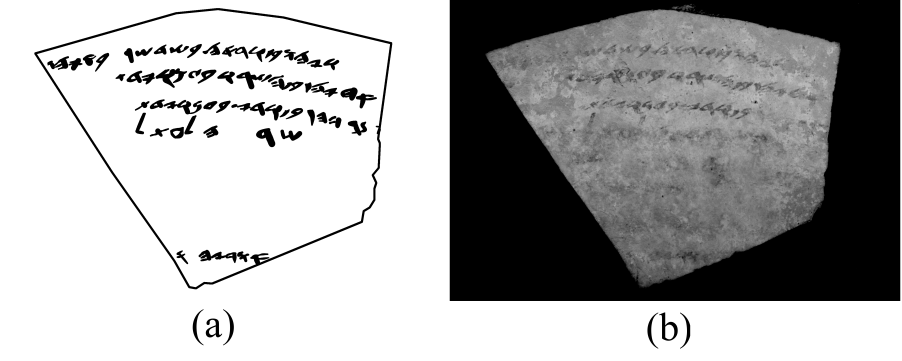}
		\par\end{centering}
	\caption{The Ophel (Jerusalem) ostracon: (a) a facsimile created by utilizing the stroke restoration algorithm; (b) the original grayscale image. The restoration, which was taken from \cite{ms2015ophel}, had been produced by Mrs. Shira Faigenbaum-Golovin and is presented here with her courtesy.}
	\label{fig:Opehl}
\end{figure}

\section{Concluding Remarks}

Since handmade restorations of characters are pivotal to the understanding of ancient Hebrew writing there arises a need to aid the epigraphers in producing a more standardized restorations and reduce the inherent bias in the process. The research presented above describes a new computer aided approach to segmentation and restoration of incomplete character strokes in a noisy background. The soundness of the restoration method was tested in various cases and produced clean and reasonable results. These restorations yield analytic representations of the strokes that can be utilized to compare given strokes to one another.

It should be noted, that the \textit{stroke restoration} algorithm was tested on more than \textbf{1000} different characters. Approximately 500 characters were reconstructed from Tel Arad inscriptions, 500 from Samaria inscriptions and the rest are restorations of the Ophel (Jerusalem) ostracon mentioned above. Upon manual inspection of the results about 11.5\% were precluded as they did not adhere to the original images sufficiently. The restorations from Tel Arad were used in a computerized paleographic investigation regarding the literacy rates in the Kingdom of Judah ca. 600 BCE \cite{AradPNAS2016}. In addition, the Ophel ostracon restorations were used to present a new reading of the inscription \cite{ms2015ophel}, while the rest are intended for a future investigations of the dissemination of writing in the Kingdom of Israel in the $8^{th}$ century BCE.

Although the algorithm presented in this work was developed to tackle difficulties stemming from Historical Document Analysis it may be useful for other fields of research. The task of the epigrapher in many ways resembles the one of the forensics expert, trying to decide whether a specific document was written by some suspect. We therefore, assume that the approach presented above could be applicable in fields related to computerized forensics.

The problem of identifying a writer of a specific document is still open. There is still a lot to be done as the identification rates of the state-of-the-art algorithms are still very low \cite{louloudis2012icfhr}. Using the analytic representations of characters could significantly enlarge the amount of features extracted. The obvious obstacle of applying our methodology to that field is the fact that it involves manual sampling of the characters. In accordance to that, in order to use similar techniques for the case of writer identification there arises a need to develop an automatic sampling procedure.

Another open subject is offline signature verification and identification of forgeries. Forgery detection is interesting not only from the forensic science perspective but for the historical purposes as well, since the market of forged antiquities flourish (specifically when referring to First Temple Period in Israel – for example see \cite{rollston2005navigating}).

\section*{Acknowledgement}
The research leading to the results reported here received funding from the Israel Science Foundation F.I.R.S.T. (Bikura) Individual Grant no. 644/08 as well as the Israel Science Foundation Grant no. 1457/13. The research was also partially funded by the European Research Council under the European Community’s Seventh Framework Programme (FP7/2007-2013)/
ERC grant agreement no. 229418, and by an Early Israel grant (New Horizons project), Tel Aviv University. This study was also supported by a generous donation of Mr. Jacques Chahine, made through the French Friends of Tel Aviv University. 

The ostraca images as well as the handmade facsimiles used in the article are courtesy of the Institute of Archaeology, Tel Aviv University; and of the Israel Antiquities Authority. 

We thank our colleagues at Tel Aviv University: Israel Finkelstein, Eliezer Piasetzky for presenting us with the challenges of ancient Hebrew inscriptions and for their insights; the priceless advice of Arie Shaus and Shira Faigenbaum-Golovin are appreciated dearly.

\section*{Appendix A - Detailed Analysis of The Fidelity Terms}
In this appendix we wish to take a closer look at the first and second terms of the right hand side of equation \eqref{eq:final_functional}:
 \begin{equation}
	\begin{array}{c}
		\lim_{c_3 \rightarrow 0} F = F_{E} = c_{1}\intop_{0}^{T}\frac{S_{I}(t)}{(r(t))^{2}}dt +  c_{2}\intop_{0}^{T}\frac{1}{\sqrt{r(t)}}dt \\
		F_{E} = F_{E}^{(S)} + F_{E}^{(r)}
	\end{array}
	.\end{equation} 
In order to give a full explanation to the power of $r(t)$ in $F_{E}^{(r)}$ we wish to rewrite it as
 \[
F_{E}^{(r)} = c_{2}\intop_{0}^{T}\frac{1}{r(t)^{\alpha}}dt
,\] 
and try to find an optimal power $0<\alpha$ that will suite our case.

\textbf{Assumptions:} For the clarity of our analysis we suppose that the handwriting is colored black (i.e, pixel value 0) over white background (i.e, pixel value 1). This is a minor assumption since all images can be binarized. Moreover, it is possible to perform a similar analysis using the fact that the stroke color is relatively dark while the background color is bright.

Accordingly, as long as the curve remains completely within the stroke
boundaries the first term $F_{E}^{(S)}$ is identically zero. If the
radius $r(t)$ at some point $(x(t),y(t))^T$ exceeds the radius of
writing (denoted $R$) then $F_{E}^{(S)}$ grows. Had we $F_{E}^{(S)}$
alone (by setting $c_{2}=0$) then the optimum could be reached at
any radius $r(t)$ that keeps the curve completely within the stroke
boundaries. Therefore, $F_{E}^{(r)}=\intop_{0}^{T}\frac{1}{r^{\alpha}}dt$
is necessary to ensure that the radius wishes to grow to the boundary
of the stroke, since $F_{E}^{(r)}$ decreases when $r(t)$ grows.

For further simplification we require that the discs' centers are
situated along the medial axis of the stroke image (i.e., skeleton). Since we force our
restoration to comply with some initial conditions
(e.g., beginning and ending of the stroke) which prevents the solution
from collapsing to a null solution, locating the discs' centers outside
the stroke image cannot result in the globally minimal curve. Therefore,
if the centers are within the stroke their most favorable position
would be at the true centers (i.e., the skeleton) so that $F_{E}^{(r)}$
will be minimal and $F_{E}^{(S)}$ will remain zero. The term $F_{E}^{(r)}$
forces $r(t)$ to grow to the limits of the writing radius, denoted by $R$. Moreover,
$F_{E}^{(r)}$ prevents $r(t)$ from collapsing to zero. 

Our final demand is that, aside from the initial conditions,
the stroke will have a low curvature, which means that locally it
resembles a straight line. This condition is not fulfilled at points
where the stroke curvature is extreme. For this reason we require
these points to be part of our initial conditions. 

To summarize our assumptions for the analysis of $F_{E}^{(S)}$ and $F_{E}^{(r)}$: 
\begin{enumerate}
	\item The stroke is colored black (pixel value 0) while the background white
	(pixel value 1).
	\item The discs' centers are situated at the skeleton of the stroke image.
	\item The original stroke\textquoteright{}s curvature will be low and will
	therefore locally resemble a straight line.
	\item We require that the stroke comply with some initial conditions
	(e.g., beginning and ending of stroke and curvature extremal points)
	to prevent null solutions as an optional result (a more detailed discussion
	regarding the initial conditions is presented in Section
	3.3)
\end{enumerate}
\textbf{Rewriting $G_{I}(t)$:} earlier we stated that $G_{I}(t)=\sum_{(p,q)\in S(t)} I(p,q) \approx \intop_{(p,q)\in S(t)}I(p,q)dt$, that is $G_{I}(t)$ is the summation of all gray level values inside
the disc. Using assumptions 1 and 2 we can deduce that as long as
$r<R$ we have $G_{I}(t)=0$ and by using assumption 3 and straightforward
calculations for $r\geq R$ we get 
 \[
G_{I}(t)\approx (\theta-\sin(\theta))\cdot r^{2} 
,\] 
where $\theta$ denotes the segment's angle which in our case equals
$\theta=2\arccos(\frac{R}{r})$ (see Figure \ref{fig:sector}). Hence, $\frac{G_{I}(t)}{r^{2}}$ is a normalized deviation measure.

\begin{figure}
	\begin{centering}
		\includegraphics[width={0.6\linewidth}]{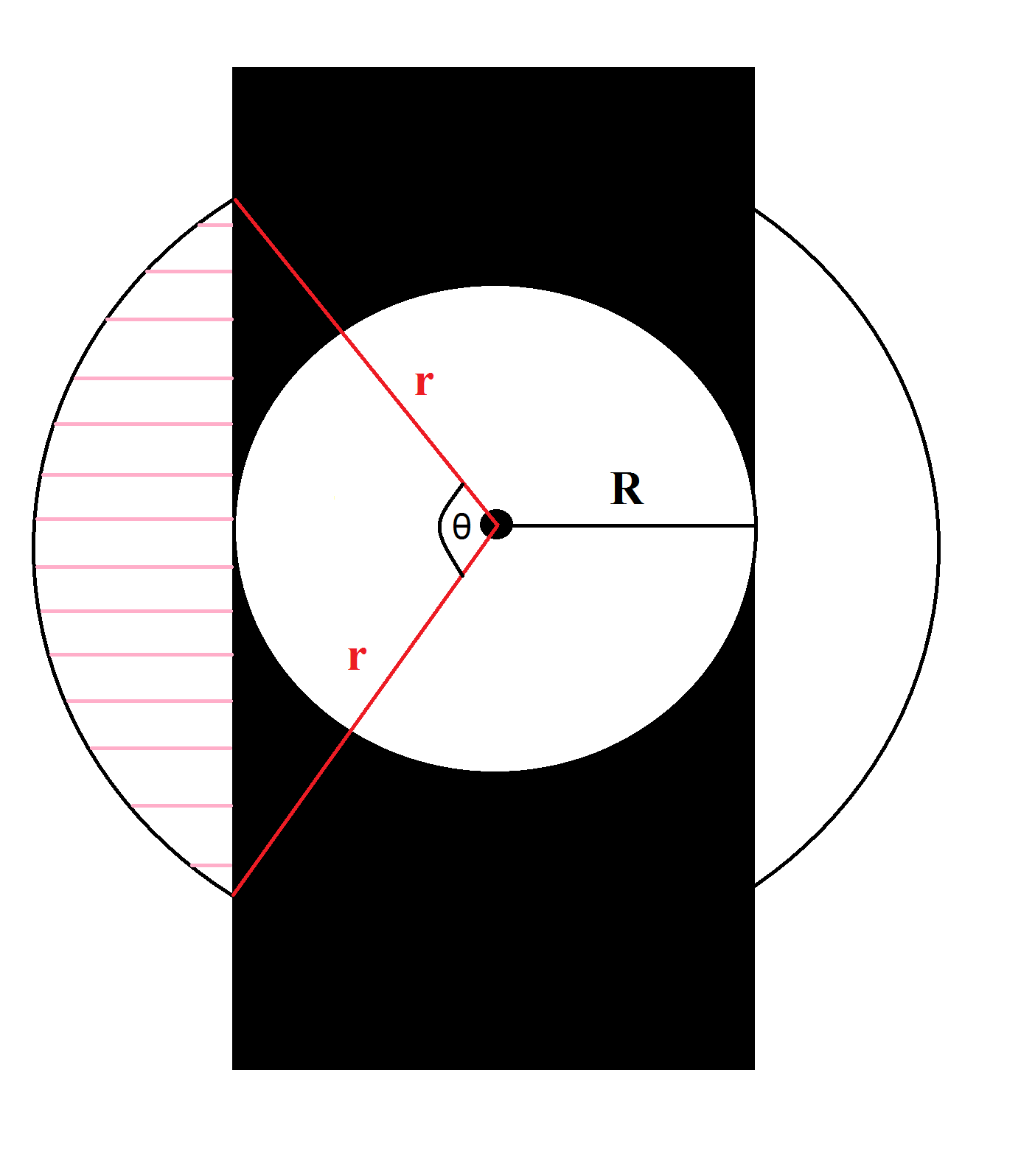}
		\par\end{centering}
	
	\caption{The black area illustrates a straight stroke. The inner disc with
		radius $R$ is the writing disc, whereas the larger disc with radius $R$ is
		the disc $S(t)$ for some specific $t$. Marked with pink lines is the segment exceeding the stroke
		to the left. The right segment has the same area.\label{fig:sector}}
\end{figure}
\textbf{Choosing the parameter $\alpha$:} As our aim is that the functional will be minimized as close as possible to $r=R$, we wish the discs to be as large as possible within the limits of the foreground. Accordingly, it is clear that $\alpha$ should be larger than zero (i.e.,  $\alpha>0$). However, in order to decide what is the most fitting value we look at a single disc $D(x_0, y_0, r)$ for some constant $x_0,y_0$ and a varying radius $r$ where the stroke is straight and the writing radius $R$ is constant(see Figure \ref{fig:sector} for an illustration). For this case we can write $F_E$ as:
 \begin{equation}
	F_E = \frac{c_2}{r^{\alpha}} + c_1 \cdot \frac{G_{I}}{r^2} \approx \frac{c_2}{r^{\alpha}} + c_1 \cdot
	\begin{cases} 
		0, & r < R  \\
		\theta  - sin(\theta), & r \geq R
	\end{cases}
	\label{eq:IeforAlpha}
	.\end{equation} 
Notice that $G_{I} \geq 0$ and when $r=R$ we have that $\theta = 0$. Therefore, $G_{I}$ is continuous in $r$ and as we will show it is also differentiable at $r=R$. Since 
\[F_E(r \leq R) = \frac{c_2}{r^{\alpha}},\]
local extremal points can appear only in $(R, \infty)$ where substituting $\theta$ with $\arccos (\frac{R}{r})$ into equation \eqref{eq:IeforAlpha} yields:
 \begin{equation}
	F_E = c_1 \cdot \left( 2 \arccos \frac{R}{r} - sin(2 \arccos \frac{R}{r}) \right) + \frac{c_2}{r^\alpha}
	.\end{equation} 
Differentiating this expression with respect to $r$ gives:
 \begin{equation}
	\frac{\partial F_E}{\partial r} = \frac{4 c_1 R \sqrt{1 - \frac{R^2}{r^2}}}{r^2} - c_2 \alpha r^{-\alpha - 1}
	\label{eq:IeDeriv}
	,\end{equation} 
and we can find the extremum for each predetermined $\alpha > 0$.
For example let us assume that $\alpha = 1$ then instead of \eqref{eq:IeDeriv} we get:
 \begin{equation}
	\frac{\partial F_E}{\partial r} = \frac{4 c_1 R \sqrt{1 - \frac{R^2}{r^2}} - c_2}{r^2} 
	.\end{equation} 
Hence, for this case, the minimum is found at:
 \begin{equation}
	r = \frac{4c_1R^2}{\sqrt{16 R^2 c_1^2 - c_2^2}}
	,\end{equation} 
which is not far from our desired minimum at $r=R$.
Taking a closer look at \eqref{eq:IeDeriv} we can see that $\frac{\partial F_E}{\partial r} (r = R^+) = \frac{\partial F_E}{\partial r} (r = R^-) = -c_2 \alpha R^{- \alpha - 1} < 0$. Therefore, the faster $-c_2 \alpha R^{- \alpha - 1}$ decreases the root of equation \eqref{eq:IeDeriv} will be closer to $R$.
Numerical experiments performed with $F_E$ indicated that as long as $\alpha \geq 0.5$ the choice of $\alpha$'s value does not affect the the results significantly.

Since our initial motivation is the restoration of a stroke in a highly noisy environment, where the ink is sometimes partly erased, we chose a conservative parameter of $\alpha = 0.5$ ,  which has proven to be more resistant to noise than larger values. 
\bibliographystyle{elsarticle-num}
\bibliography{mybib}

\begin{thebibliography}{10}
\expandafter\ifx\csname url\endcsname\relax
  \def\url#1{\texttt{#1}}\fi
\expandafter\ifx\csname urlprefix\endcsname\relax\def\urlprefix{URL }\fi
\expandafter\ifx\csname href\endcsname\relax
  \def\href#1#2{#2} \def\path#1{#1}\fi

\bibitem{rollston1999script}
C.~A. Rollston, The Script of Hebrew Ostraca of the Iron Age: 8th-6th Centuries
  BCE, Johns Hopins University, 1999.

\bibitem{AharoniArad}
Y.~Aharoni, J.~Naveh, Arad inscriptions, Israel exploration society, 1981.

\bibitem{shaus2012bin}
A.~Shaus, E.~Turkel, E.~Piasetzky, Binarization of first temple period
  inscriptions: Performance of existing algorithms and a new registration based
  scheme., in: ICFHR, 2012, pp. 645--650.

\bibitem{Otsu}
N.~Otsu, A threshold selection method from gray-level histograms, Automatica
  11~(285-296) (1975) 23--27.

\bibitem{Sauvola}
J.~Sauvola, M.~Pietik{\"a}inen, Adaptive document image binarization, Pattern
  recognition 33~(2) (2000) 225--236.

\bibitem{wolf2011computerized}
L.~Wolf, L.~Potikha, N.~Dershowitz, R.~Shweka, Y.~Choueka, Computerized
  paleography: tools for historical manuscripts, in: Image Processing (ICIP),
  2011 18th IEEE International Conference on, IEEE, 2011, pp. 3545--3548.

\bibitem{fischer2010codicology}
F.~Fischer, C.~Fritze, G.~Vogeler, Codicology and palaeography in the digital
  age 2, Vol.~2, BoD--Books on Demand, 2010.

\bibitem{diem2011recognizing}
M.~Diem, R.~Sablatnig, M.~Gau, H.~Miklas, Recognizing Degraded Handwritten
  Characters, Vol.~3, Books on Demand (BoD), 2011.

\bibitem{lavrenko2004holistic}
V.~Lavrenko, T.~M. Rath, R.~Manmatha, Holistic word recognition for handwritten
  historical documents, in: Document Image Analysis for Libraries, 2004.
  Proceedings. First International Workshop on, IEEE, 2004, pp. 278--287.

\bibitem{Papa2014WriterIdentification2}
C.~Papaodysseus, P.~Rousopoulos, F.~Giannopoulos, S.~Zannos, D.~Arabadjis,
  M.~Panagopoulos, E.~Kalfa, C.~Blackwell, S.~Tracy, Identifying the writer of
  ancient inscriptions and byzantine codices. a novel approach, Computer Vision
  and Image Understanding 121 (2014) 57--73.

\bibitem{Papa2009WriterIdentification1}
M.~Panagopoulos, C.~Papaodysseus, P.~Rousopoulos, D.~Dafi, S.~Tracy, Automatic
  writer identification of ancient greek inscriptions, Pattern Analysis and
  Machine Intelligence, IEEE Transactions on 31~(8) (2009) 1404--1414.

\bibitem{ms2012joas}
S.~Faigenbaum, B.~Sober, A.~Shaus, M.~Moinester, E.~Piasetzky, G.~Bearman,
  M.~Cordonsky, I.~Finkelstein, Multispectral images of ostraca: acquisition
  and analysis, Journal of Archaeological Science 39~(12) (2012) 3581--3590.

\bibitem{ms2014peq}
B.~Sober, S.~Faigenbaum, I.~Beit-Arieh, I.~Finkelstein, M.~Moinester,
  E.~Piasetzky, A.~Shaus, Multispectral imaging as a tool for enhancing the
  reading of ostraca, Palestine Exploration Quarterly 146~(3) (2014) 185--197.

\bibitem{msMalhata}
S.~Faigenbaum, B.~Sober, M.~Moinester, E.~Piasetzky, Tel Malhata: A Central
  City in the Biblical Negev, Eisenbrauns, Incorporated, Tel Aviv, 2015, Ch.
  Appendix 5.2: Multispectral Imaging of Tel Malhata Ostraca.

\bibitem{shaus2012quality}
A.~Shaus, E.~Turkel, E.~Piasetzky, Quality evaluation of facsimiles of hebrew
  first temple period inscriptions, in: Document Analysis Systems (DAS), 2012
  10th IAPR International Workshop on, IEEE, 2012, pp. 170--174.

\bibitem{GlyphEvaluation}
S.~Faigenbaum, A.~Shaus, B.~Sober, E.~Turkel, E.~Piasetzky, Evaluating glyph
  binarizations based on their properties, in: Proceedings of the 2013 ACM
  symposium on Document engineering, ACM, 2013, pp. 127--130.

\bibitem{lopresti2010ruling}
D.~Lopresti, E.~Kavallieratou, Ruling line removal in handwritten page images,
  in: Pattern Recognition (ICPR), 2010 20th International Conference on, IEEE,
  2010, pp. 2704--2707.

\bibitem{abd2009page}
W.~Abd-Almageed, J.~Kumar, D.~Doermann, Page rule-line removal using linear
  subspaces in monochromatic handwritten arabic documents, in: Document
  Analysis and Recognition, 2009. ICDAR'09. 10th International Conference on,
  IEEE, 2009, pp. 768--772.

\bibitem{arvind2007line}
K.~Arvind, J.~Kumar, A.~Ramakrishnan, Line removal and restoration of
  handwritten strokes, in: Conference on Computational Intelligence and
  Multimedia Applications, 2007. International Conference on, Vol.~3, IEEE,
  2007, pp. 208--214.

\bibitem{cao2002text}
R.~Cao, C.~L. Tan, Text/graphics separation in maps, in: Graphics Recognition
  Algorithms and Applications, Springer, 2002, pp. 167--177.

\bibitem{de1978practical}
C.~De~Boor, A practical guide to splines, Mathematics of Computation.

\bibitem{kass1988snakes}
M.~Kass, A.~Witkin, D.~Terzopoulos, Snakes: Active contour models,
  International journal of computer vision 1~(4) (1988) 321--331.

\bibitem{mumford1989optimal}
D.~Mumford, J.~Shah, Optimal approximations by piecewise smooth functions and
  associated variational problems, Communications on pure and applied
  mathematics 42~(5) (1989) 577--685.

\bibitem{SSVM2011}
A.~M. Bruckstein, B.~M. Haar~Romeny, A.~M. Bronstein, M.~M. Bronstein (Eds.),
  Scale Space and Variational Methods in Computer Vision, SSVM 2011, Ein-Gedi,
  Israel, May 29--June 2, 2011, Revised Selected Papers, Vol. 6667, Springer
  2012, 2011.

\bibitem{schomaker1991simulation}
L.~R.~B. Schomaker, Simulation and recognition of handwriting movements, Ph.D.
  thesis, Citeseer (1991).

\bibitem{gangadhar2007oscillatory}
G.~Gangadhar, D.~Joseph, V.~S. Chakravarthy, An oscillatory neuromotor model of
  handwriting generation, International Journal of Document Analysis and
  Recognition (IJDAR) 10~(2) (2007) 69--84.

\bibitem{feldman1966functional}
A.~G. Feldman, Functional tuning of nervous system with control of movement or
  maintenance of a steady posture. 2. controllable parameters of muscles,
  BIOPHYSICS-USSR 11~(3) (1966) 565.

\bibitem{feldman2005testing}
A.~G. Feldman, M.~L. Latash, Testing hypotheses and the advancement of science:
  recent attempts to falsify the equilibrium point hypothesis, Experimental
  Brain Research 161~(1) (2005) 91--103.

\bibitem{van1983independent}
G.~P. Van~Galen, H.-L. Teulings, The independent monitoring of form and scale
  factors in handwriting, Acta Psychologica 54~(1) (1983) 9--22.

\bibitem{plamondon1995kinematic}
R.~Plamondon, A kinematic theory of rapid human movements, Biological
  cybernetics 72~(4) (1995) 295--307.

\bibitem{plamondon2007extraction}
R.~Plamondon, X.~Li, M.~Djioua, Extraction of delta-lognormal parameters from
  handwriting strokes, Frontiers of Computer Science in China 1~(1) (2007)
  106--113.

\bibitem{flash1985coordination}
T.~Flash, N.~Hogan, The coordination of arm movements: an experimentally
  confirmed mathematical model, The journal of Neuroscience 5~(7) (1985)
  1688--1703.

\bibitem{edelman1987model}
S.~Edelman, T.~Flash, A model of handwriting, Biological Cybernetics 57~(1-2)
  (1987) 25--36.

\bibitem{wada1995theory}
Y.~Wada, M.~Kawato, A theory for cursive handwriting based on the minimization
  principle, Biological Cybernetics 73~(1) (1995) 3--13.

\bibitem{plamondon2014recent}
R.~Plamondon, C.~O’Reilly, J.~Galbally, A.~Almaksour, {\'E}.~Anquetil, Recent
  developments in the study of rapid human movements with the kinematic theory:
  Applications to handwriting and signature synthesis, Pattern Recognition
  Letters 35 (2014) 225--235.

\bibitem{ICDAR2005SplineStroke}
V.~Pervouchine, G.~Leedham, K.~Melikhov, Three-stage handwriting stroke
  extraction method with hidden loop recovery, in: Document Analysis and
  Recognition, 2005. Proceedings. Eighth International Conference on, IEEE,
  2005, pp. 307--311.

\bibitem{SplineMinimalSecondDerivative}
J.~C. Holladay, A smoothest curve approximation, Mathematical Tables and Other
  Aids to Computation (1957) 233--243.

\bibitem{barzilai1988two}
J.~Barzilai, J.~M. Borwein, Two-point step size gradient methods, IMA Journal
  of Numerical Analysis 8~(1) (1988) 141--148.

\bibitem{na2011elyashib}
N.~Na'aman, Textual and historical notes on the eliashib archive from arad, Tel
  Aviv 38~(1) (2011) 83--93.

\bibitem{ms2015ophel}
S.~Faigenbaum-Golovin, C.~A. Rollston, E.~Piasetzky, B.~Sober, I.~Finkelstein,
  The ophel (jerusalem) ostracon in light of new multispectral images, Semitica
  57 (2015) 113--137.

\bibitem{AradPNAS2016}
S.~Faigenbaum-Golovin, A.~Shaus, B.~Sober, D.~Levin, N.~Na’aman, B.~Sass,
  E.~Turkel, E.~Piasetzky, I.~Finkelstein, Algorithmic handwriting analysis of
  judah’s military correspondence sheds light on composition of biblical
  texts, Proceedings of the National Academy of Sciences 113~(17) (2016)
  4664--4669.

\bibitem{louloudis2012icfhr}
G.~Louloudis, B.~Gatos, N.~Stamatopoulos, Icfhr 2012 competition on writer
  identification challenge 1: Latin/greek documents., in: ICFHR, Citeseer,
  2012, pp. 829--834.

\bibitem{rollston2005navigating}
C.~A. Rollston, Navigating the epigraphic storm: A palaeographer reflects on
  inscriptions from the market, Near Eastern Archaeology (2005) 69--72.

\end{thebibliography}
\end{document}